\newcommand{\Mod}[1]{\ (\mathrm{mod}\ #1)}
\newcommand{\mytitle}{Learning with Errors is easy with quantum samples}
\newcommand{\Alex}{Alex B. Grilo}
\newcommand{\Iordanis}{Iordanis Kerenidis}
\newcommand{\Timo}{Timo Zijlstra}
\newtheorem{theorem}{Theorem}[section]
\newtheorem{lemma}[theorem]{Lemma}
\newtheorem{corollary}[theorem]{Corollary}
\newtheorem{definition}[theorem]{Definition}
\newcommand {\abs} [1] {\ensuremath{ \left| #1 \right| }}
\newcommand {\norm} [1] {\ensuremath{ \left\| #1 \right\| }}
\newcommand{\eeq}{\end{eqnarray}}
\newcommand{\half}{\ensuremath{\frac{1}{2}}}
\newcommand{\ket}[1]{\ensuremath{\left|#1\right\rangle}}
\newcommand{\bra}[1]{\left\langle#1\right|}
\newcommand{\kb}[1]{\ensuremath{\ket{#1}\bra{#1}}}
\newcommand{\pr}[1]{Pr\left[#1\right]}
\newcommand{\evalue}[1]{\mathbb{E}\left[#1\right]}
\newcommand{\C}{\ensuremath{\mathbb{C}}}
\newcommand{\F}{\ensuremath{\mathbb{F}}}
\newcommand{\R}{\ensuremath{\mathbb{R}}}
\newcommand{\Z}{\ensuremath{\mathbb{Z}}}
\newcommand{\Fq}{\ensuremath{\mathbb{F}}_q}
\newcommand{\Ftwo}{\ensuremath{\mathbb{F}}_2}
\newcommand{\eps}{\varepsilon}
\newcommand{\gate}[1]{{\sf #1}}
\newcommand{\gT}[1]{\gate{T}}
\newcommand{\comment}[1]{}
\newcommand{\bounddistribution}{ k}
\newcommand{\inversebounddistribution}{\frac{\gamma q}{\bounddistribution}}
\newcommand{\goodjn}{ \sum_{\substack{j^* \in \Fq^*  \\ j^* \leq
\inversebounddistribution}}}
\newcommand{\bigslant}[2]{{\raisebox{.2em}{$#1$}\left/\raisebox{-.2em}{$#2$}\right.}}
\begin{document}
\title{\mytitle}
  \author{\Alex$^1$}
  \author{\Iordanis$^1$}
	\author{\Timo$^2$}
  \affiliation{$^1$ IRIF, CNRS, Universit\'{e} Paris Diderot, Paris, France}
  \affiliation{$^2$ Lab-STICC, Universit\'{e} Bretagne Sud}

  \begin{abstract}
    Learning with Errors is one of the fundamental problems in computational
    learning theory and has in the last years become the cornerstone of
    post-quantum cryptography. In this work, we study the quantum sample
    complexity of Learning with Errors and show that there exists an efficient
    quantum learning algorithm (with polynomial sample and time complexity) for the Learning with Errors problem where
    the error distribution is the one used in cryptography. While our quantum
    learning algorithm does not break the LWE-based encryption schemes proposed
    in the cryptography literature, it does have some interesting implications
    for cryptography: first, when building an LWE-based scheme, one needs to be
    careful about the access to the public-key generation algorithm that is given
    to the adversary; second, our algorithm shows a possible way for attacking
    LWE-based encryption by using classical samples to approximate the
    quantum sample state, since then using our quantum learning algorithm would
    solve LWE. Finally, we extend our results and show quantum learning algorithms for three related problems: Learning Parity with Noise, Learning with Rounding and Short Integer Solution.
  \end{abstract}

  \maketitle

  \begin{bibunit}[plain]
  \section{Introduction}
  The large amount of data arising in the real world, for example through
  scientific observations, large-scale experiments, internet traffic, social
  media, etc, makes it necessary to be able to predict some general properties
  or behaviors of the data from a limited number of samples of the data. In this
  context, Computational Learning Theory provides rigorous models for learning
  and studies  the necessary and sufficient resources, for example, the number
  of samples or the running time of the learning algorithm. In his seminal work,
  Valiant \cite{Valiant84} introduced the model of PAC
  learning, and since then this model has been extensively studied and has given rise to numerous extensions.

In another revolutionary direction, Quantum Computing takes advantage of the quantum nature of small-scale systems as a computational resource. In this
  field, the main question is to understand what problems can be solved more efficiently in a quantum computer than in
  classical computers. In the intersection of the two fields, we have Quantum Learning Theory, where we ask if quantum learning algorithms can be more efficient than classical ones. 

One of course needs to be careful about defining quantum learning and more precisely, what kind of access to the data a quantum learning algorithm has. On one hand, we can just provide classical samples to the quantum learning algorithm that can then use the quantum power in processing these classical data. In the more general scenario, we allow the quantum learning algorithm to receive quantum samples of the data, for a natural notion of a quantum sample as a superposition that corresponds to the classical sample distribution. 

More precisely, in classical learning, the learning algorithm is provided with samples of $(x, f(x))$,
  where $x$ is drawn from some unknown distribution $D$ and $f$ is the function we wish to learn. The goal of the learner
  in this case is to output a function $g$ such that with high probability (with
  respect to the samples received), $f$ and $g$ are close, i.e., $\pr{f(x) \ne
  g(x)}$ is small when $x$ is drawn from the same distribution $D$.  

The extension of this model to the quantum setting is that the samples now are
  given in the form of a quantum state $\sum_{x} \sqrt{D(x)}\ket{x}\ket{f(x)}$.
  Note that one thing the quantum learner can do with this state is simply
  measure it in the computational basis and get a classical sample from the distribution $D$. Hence, a
  quantum sample is at least as powerful as a classical sample. The main
  question is whether the quantum learner can make better use of these quantum samples and provide an advantage in the number of samples and/or running time compared to a classical learner.

In this work we focus on one of the fundamental problems in learning theory, the Learning with Errors (LWE).
  In LWE,  one is given samples of the form
  \[
  \left(a, a\cdot s + e \pmod{q} \right) 
  \]
  where $s \in \Fq^n$ is fixed, $a \in \Fq^n$ is drawn uniformly at random and
  $e \in \Fq$ is an 'error' term drawn from some distribution $\chi$. The goal is to
  output $s$, while minimizing the number of samples used and the computation
  time.

First, LWE is the natural generalisation of the well-studied Learning Parity with Noise problem (LPN), which is the case of $q=2$. Moreover, a lot
  of attention was drawn to this problem when Regev \cite{Regev05} reduced
  some (expected to be) hard problems involving lattices to LWE. With this
  reduction, LWE has become the cornerstone of current post-quantum
  cryptographic schemes. 
  Several cryptographic primitives proposals such as Fully Homomorphic
  Encryption \cite{BrakerskiV14}, Oblivious Transfer \cite{PeikertVW08}, Identity based
  encryption \cite{GentryPV08, CashHKP12,AgrawalBB10},
  and others schemes are based in the hardness of LWE (for a more complete list
  see Ref. \cite{MicciancioRegev} and Ref. \cite{Peikert16}).

  Classically, Blum et al. \cite{Blum2003} proposed the first sub-exponential algorithm for
  this problem, where both sample and time complexities are  $2^{O(n/\log{n})}$.
  Then, Arora and Ge \cite{AroraG11} improved the time complexity for LWE with a learning
  algorithm that runs in $2^{\tilde{O}(n^{2\eps})}$ time, for some $\eps <
  \frac{1}{2}$, and it uses at least $\Omega(q^2\log{q})$ 
  samples
  For LPN, 
  Lyubashevsky \cite{Lyubashevsky05} has proposed an algorithm with sample complexity $n^{1+\eps}$  at the
  cost of increasing computation time to $O(2^{n/\log{\log{n}}})$.

\section{The quantum learning model}
  
  In this work, we use the model of learning under the uniform distribution where the learner receives 
  samples according to the uniform distribution and outputs the exact function
  with  high probability.
  In the quantum setting, the learning algorithm is given quantum samples, namely a
  uniform superposition of the inputs and function values, 
  \[\sum_{x \in X} \frac{1}{\sqrt{|X|}}\ket{x}\ket{f(x)}.\]
  
In this work, we are interested in noisy samples, that can be modeled by setting
$f(x) = g(x) + e(x,r)$, where  $g$ and $e$ are deterministic functions, $x \in X$ and $r \in R$ is the randomness necessary to generate the noise. For defining the quantum sample, we start with the superposition
  \[\frac{1}{\sqrt{|R|}} \sum_{r \in R} \ket{r}\left(\frac{1}{\sqrt{|X|}} \sum_{x \in X} 
  \ket{x}\ket{g(x) + e(x,r)}\right),\]
  and then the register corresponding to the randomness is traced out. It means that with probability $\frac{1}{|R|}$ the quantum sample is  
  \[\frac{1}{\sqrt{|X|}}\ket{x}\ket{g(x) + e(x,r)}, \]
  for each possible value $r \in R$.

We consider the noise model defined in Bshouty and Jackson \cite{Bshouty95}, where
 independent noise is added for each element in the superposition, in other words, $r = (r_1,...,r_{|X|})$ and $e(x,r) = e'(r_x)$. This model is a natural generalisation for quantum samples with noise since it can be
  seen as a superposition of the classical samples. 
 
 In contrast, Cross et al. \cite{CrossSS15} proposed a noise function that is independent of $x$. 
Although our noise model might require exponentially more resources to implement quantum samples, we show that this does not make the problem intractable. Also, this
  is the kind of state we would get after solving the index erasure problem.

  \section{Our contributions}
  In this work we study quantum algorithms for solving LWE with quantum samples. Let us be more explicit on the definition of a quantum sample for the LWE problem. We assume that the quantum learning algorithm receives samples in the form

  \begin{equation}
    \frac{1}{\sqrt{q^n}} \sum_{a \in \Fq^n} \ket{a}\ket{a \cdot s
    + e_a \pmod{q}} ,
  \end{equation}
    where $e_a$ are iid random variables from some distribution
    $\chi$ over $\Fq$. 

As expected, the performance of the learning algorithm, both in the classical and quantum case, is
  sensitive to the noise model adopted, i.e. to the distribution $\chi$. When LWE is used in cryptographic schemes, the distribution $\chi$ has support on a small interval around 0, either uniform or a discrete gaussian. We prove that for such distributions, there exists an efficient quantum learner for LWE.\\

\noindent
{\bf Main Result}[informal]
{\em For error distributions $\chi$ used in cryptographic schemes, and for any
$\eta>0$, there exists a quantum learning algorithm that solves LWE with probability $1-\eta$ using $O(n \log \frac{1}{\eta})$ samples and running time $poly(n,\log \frac{1}{\eta})$.}\\

Another interesting feature of our quantum learner is that it is conceptually a
very simple algorithm based on one of the basic quantum operations, the Quantum
Fourier Transform. Such algorithms have even started to be implemented, of
course for very small input sizes and for the binary case \cite{Riste15}. Nevertheless, as far as quantum algorithms are concerned, our learner is quite feasible from an implementation point of view.

  The approach to solve the problem is a generalisation of
  Bernstein-Vazirani algorithm \cite{BernsteinV1997}: we start with a quantum sample,
  apply a Quantum Fourier Transform over $\Fq$ on each qudit, and then, we measure in the computational
  basis. Our analysis shows that, when the last qudit is not 0, which happens with high probability, the value of the remaining registers gives $s$ with constant probability. We can then repeat this process so that our algorithm outputs $s$ with high probability.

  We also use the same technique in quantum learning algorithms for three related problems.
  First, we generalise the result proposed by Cross et al.~\cite{CrossSS15} and
  Rist\`{e} et al.~\cite{Riste15} for
  the LPN problem. The main difference with their work is that we start with a
  quantum sample, i.e. a state where the noise is independent for each element
  in the superposition. Second, we show how to solve the Learning with Rounding
  problem, which can be seen as a derandomized version of LWE. Finally, we also
  propose a quantum learning algorithm for another relevant problem in
  cryptography, the Short Integer Solution problem.

  \subsection{Related work}
  We now review some results on quantum algorithms for learning problems. For a
  more extended introduction, see the survey by Arunachalam and de Wolf \cite{ArunachalamW17}.

  The first approach on trying to solve learning problems with quantum samples
  was proposed by  Bshouty and Jackson \cite{Bshouty95}, where they prove that DNFs can be learned
  efficiently, even when the samples are noisy.  No such efficient learners are
  known classically.
 
  Despite not presenting it as a learning problem, Bernstein and Vazirani \cite{BernsteinV1997}
  show how to learn parity using a single quantum sample, while classically we
  need a linear number of samples.

  Some years later, Servedio and Gortler \cite{ServedioG04} showed that classical
  and quantum sample/query complexity of learning problems are polynomially
  related, but they showed that for time complexity there exist exponential
  separations between classical and quantum learning (assuming standard
  computational hardness assumptions).

  Then, Ambainis et al. \cite{Ambainis2004}, Atici and Servedio \cite{AticiS05},
  and Hunziker et al. \cite{Hunziker2010} provided general upper
  bounds on the query complexity for learning problems that depend on the size
  of the concept class being learned.

  On specific problems, Atici and Servedio \cite{AticiS07} and Belovs
  \cite{Belovs2015} provided quantum algorithms for learning juntas and Cross et
  al. \cite{CrossSS15} proposed and implemented quantum algorithms for
  LPN in a different noise model.
  
  Recently, Arunachalam and de Wolf \cite{ArunachalamW16} proved optimal bounds
  for the quantum sample complexity of the Quantum PAC model.

 \subsection{Relation to LWE-based cryptography}
 
 As we have mentioned, LWE is used in cryptography for many different tasks. Let us briefly describe how one can build an encryption scheme based on LWE \cite{Regev05}. 
 The key generation algorithm produces a secret key $s \in \Fq$, while the public key
  consists of a sequence of classical LWE samples $(a_1, a_1 \cdot s + e_1
  \pmod{q}), ..., (a_m,
  a_m \cdot s + e_m  \pmod{q} )$, where the error comes from a distribution with support in a small interval around 0. For the encryption of a bit $b$, the party picks a
  subset $S$ of $[m]$ uniformly at random and outputs 
  \[\left(\sum_{i \in S} a_i  \Mod{q} , b\left\lceil\frac{q}{2}\right\rceil +  \sum_{i \in S} a_i \cdot s +
  e_i  \Mod{q} \right).\]
  For the decryption, knowing $s$ allows one to find $b$. The security analysis of the encryption scheme postulates that if an adversary can break the encryption efficiently then he is also able to solve the LWE problem efficiently. 

The algorithm we present here does not break the above LWE-based
encryption scheme. Nevertheless, it has interesting implications for cryptography.

First, our algorithm shows a possible way for attacking LWE-based encryption: use classical samples to approximate the quantum sample state, and then use our algorithm to solve LWE. One potential way for this would be to start with $m$ classical samples and create the following superposition
  \[\sum_{S \subseteq [m]} \ket{S}
  \ket{\sum_{i \in S} a_i  \Mod{q} }\ket{\sum_{i \in S} a_i \cdot s +
  e_i  \Mod{q} }.
  \]
  This operation is in fact efficient. Then, in order to approximate the quantum sample state, one would need to 'forget' the first register that contains the index information about which subset of the $m$ classical samples we took. In the most general case, such an operation of forgetting the index of the states in a quantum superposition, known as index-erasure
  (see Aharonov and Ta-Shma \cite{AharonovTS2003} and Ambainis et al. \cite{Ambainis2004}), is exponentially hard,
  and a number of problems, such as Graph Non-isomorphism, would have an
  efficient quantum algorithm, if we could do it efficiently. Nevertheless, one
  may try to use the extra structure of the LWE problem to find sub-exponential algorithms for this case.
  
A second concern that our algorithm raises is that  when building an LWE-based scheme, one needs to be careful on the access to the public-key generation algorithm that is given to the adversary. It is well-known that for example, even in the classical case, if the adversary can ask classical queries to the LWE oracle, then he can easily break the scheme: by asking the same query many times one can basically average out the noise and find the secret $s$. However, if we just assume that the public key is given as a box that an agent has passive access to it, in the sense that he can request a random sample and receive one, then the encryption scheme is secure classically as long as LWE is difficult. However, imagine that the random sample from LWE is provided
  by a device that creates a superposition 
    $\frac{1}{\sqrt{q^n}} \sum_{a \in \Fq^n} \ket{a}\ket{a \cdot s
    + e_a \pmod{q}}$ and then measures it. 
Then a quantum adversary that has access to this quantum state can break the scheme. Again, our claim is, by no means, that our algorithm breaks the proposed LWE-based encryption schemes, but more that LWE-based schemes which are secure classically (assuming the hardness of LWE) may stop being secure against quantum adversaries if the access to the public key generation algorithm becomes also quantum. 

  A similar situation has also appeared in the symmetric key cryptography with the so called 
  superposition attacks \cite{ Zhandry12, Boneh2013,Damgard2014, Kaplan2016}.
  There, the attacker has the ability to query the encryption oracle in
  superposition, and in this way, she can in fact break many schemes that are assumed to be secure classically. 
While in the case of symmetric cryptography, the attacker must have quantum access to the encryption oracle in order to break the system, our results show that in the case of LWE-based public-key encryption, the attacker must have quantum access to the public key generation algorithm.

\section{Algorithm for LWE}\label{sec:algo-lwe}
In this section we present the extension of the Bernstein-Vazirani algorithm for
higher order fields and analyse its behaviour with LWE samples.

We show now the Field Bernstein-Vazirani algorithm \cite{BernsteinV1997} and its main component is the  Quantum Fourier Transform over $\Fq$,
  $    QFT\ket{j} = \frac{1}{\sqrt{q^{n}}} \sum_{k = 0}^{q^n - 1}
  \omega^{jk}\ket{k}$.

\noindent \rule[1ex]{0.45\textwidth}{0.5pt}

\noindent
{\bf Field Bernstein-Vazirani algorithm}\\
{\bf Input:} $\ket{\psi} \in (\C^2)^{\otimes n+1}$\\
{\bf Output:} $\tilde{s} \in \Fq^n \cup \{\perp\}$

\noindent Apply $QFT^{\otimes n+1}$  on \ket{\psi}.

\noindent Measure in the computational basis

\noindent Let $\ket{j}\ket{j^*}$ be the output

\noindent If\;  $j^* \ne 0$, return $ -(j^*)^{-1}j \pmod{q}$\\
      Else, return $\perp$

\noindent \rule[1ex]{0.45\textwidth}{0.5pt}

For warming-up, we show the behaviour of Field Bernstein-Vazirani for learning
linear functions without noise in \Cref{sec:learning-parity} and then in
\Cref{sec:noisy} we analyse it for LWE samples. 

  \subsection{Quantum algorithm for learning a linear function without error} \label{sec:learning-parity}
If the input $\ket{\psi}$ is a noiseless quantum sample of a linear function, 
namely
  \begin{align}
    \label{eq:sample}
  \ket{\psi} = \frac{1}{\sqrt{q^n}} \sum_{a \in \Fq^n}
    \ket{a}\ket{a \cdot s \pmod{q}},
  \end{align}
Then the Field-Bernstein Vazirani outputs the correct value with probability
$\frac{q-1}{q}$: after applying the QFT on each qudit of \cref{eq:sample},  we get
  the state
  \begin{align*}
    \frac{1}{q^{n+\frac{1}{2}}}\sum_{a,j \in
    \Fq^n}\sum_{j^*\in\Fq} 
    \omega^{a \cdot (j + j^*s)}
    \ket{j}\ket{j^*}.
  \end{align*}
  It is not hard to see that the probability that for all $i \in [n]$, we have
  $j = -j^*s \pmod{q}$ and $j^* \ne 0$ is
    \begin{align*}
      &\norm{\frac{1}{q^{n+\half}}\sum_{j^* \in \Fq^*}\sum_{a \in \Fq^n} 
      \omega^{0}\ket{-j^*s \pmod{q}}}^2
      &=\frac{q-1}{q}.
    \end{align*}

  Therefore, if $j^* \ne 0$, we can retrieve $s$ by outputting
  $-(j^*)^{-1} j_i$ (all operations mod q). 
 
 \subsection{Analysis of the algorithm for noisy samples}
 \label{sec:noisy}

In this section we show that the Field Bernstein-Vazirani algorithm works even
if the input is noisy.  Instead of the superposition of all elements in $\Fq^n$,
we prove our result here for a more general case
where the quantum sample has the form
\[\ket{\psi} = \frac{1}{\sqrt{v}} \sum_{a \in V}
    \ket{a}\ket{a\cdot s + e_a (mod \; q)},\] 
where $v \in [q^n]$ is a fixed value, $V$ be a random subset of $\Fq^n$ of size
$v$ and $e_a$ is a random noise. In this case, for every quantum sample, a new subset $V$ of size $v$ is picked independently at random.

\begin{theorem} \label{thm:general}
Fix $v \in [q^n]$.  Let $V \subseteq \Fq^n$ be a random subset of $\Fq^n$ such that $|V| = v$, and let 
    \[\ket{\psi} = \frac{1}{\sqrt{v}} \sum_{a \in V}
    \ket{a}\ket{a\cdot s + e_a (mod \; q)},\] 
    where the $e_a$ are random variables with absolute value at most $k$.
The Field Bernstein-Vazirani$(\ket{\psi})$ outputs $s$ with probability
$\frac{v}{20kq^{n}}$.
\end{theorem}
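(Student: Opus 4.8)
The plan is to mirror the noiseless computation of \Cref{sec:learning-parity}, tracking how the errors $e_a$ perturb the amplitudes. First I would apply $QFT^{\otimes n+1}$ to $\ket\psi$: on the first $n$ qudits $\ket a\mapsto \frac{1}{\sqrt{q^n}}\sum_{j\in\Fq^n}\omega^{a\cdot j}\ket j$ and on the last $\ket{a\cdot s+e_a}\mapsto\frac{1}{\sqrt q}\sum_{j^*\in\Fq}\omega^{j^*(a\cdot s+e_a)}\ket{j^*}$, so the amplitude of $\ket j\ket{j^*}$ is $\frac{1}{\sqrt{vq^{n+1}}}\sum_{a\in V}\omega^{a\cdot(j+j^*s)+j^*e_a}$, where $\omega=e^{2\pi i/q}$. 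The algorithm returns the secret exactly when it measures an outcome with $j^*\neq 0$ and $j=-j^*s$, since then $-(j^*)^{-1}j=s$ (here I use that $\Fq$ is a field, so every $j^*\in\Fq^*$ is invertible). These outcomes are distinct basis states, so their probabilities add; and at $j=-j^*s$ the phase $\omega^{a\cdot(j+j^*s)}$ is trivial and the sum over $a$ collapses, giving for the probability $\pr{s}$ of outputting the secret
\[
\pr{s}=\sum_{j^*\in\Fq^*}\frac{1}{vq^{n+1}}\abs{\sum_{a\in V}\omega^{j^*e_a}}^2 .
\]

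Next I would discard all but the ``good'' frequencies, namely those with $j^*\leq\inversebounddistribution$ for a constant $\gamma$ fixed later; since every term is non-negative this only decreases the bound. The point of restricting to good $j^*$ is that each error then contributes a phase $\omega^{j^*e_a}=e^{2\pi i j^*e_a/q}$ with $\abs{j^*e_a}\leq\gamma q$, so the integer $j^*e_a$ does not wrap around modulo $q$ and each phase lies within angle $2\pi\gamma$ of the positive real axis. Lower bounding the magnitude of a complex sum by its real part then yields, \emph{deterministically} for every realization of $V$ and of the errors,
\[
\abs{\sum_{a\in V}\omega^{j^*e_a}}\geq\mathrm{Re}\sum_{a\in V}\omega^{j^*e_a}=\sum_{a\in V}\cos\!\Big(\tfrac{2\pi j^*e_a}{q}\Big)\geq v\cos(2\pi\gamma),
\]
using only $\abs{e_a}\leq k$ and $\abs{V}=v$. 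Notably no averaging over the noise is needed, so the claimed probability holds for every sample rather than merely in expectation.

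Finally I would count the good frequencies. There are $\lfloor\inversebounddistribution\rfloor$ of them, each contributing at least $\frac{v\cos^2(2\pi\gamma)}{q^{n+1}}$, so
\[
\pr{s}\geq\left\lfloor\inversebounddistribution\right\rfloor\frac{v\cos^2(2\pi\gamma)}{q^{n+1}}.
\]
Treating $\lfloor\gamma q/k\rfloor$ as $\gamma q/k$ (the floor is negligible in the cryptographic regime $k\ll q$, and the slack below absorbs it) this is essentially $\frac{\gamma\cos^2(2\pi\gamma)\,v}{k\,q^n}$, so it suffices to pick $\gamma$ with $\gamma\cos^2(2\pi\gamma)\geq\frac{1}{20}$; the choice $\gamma=\tfrac18$ gives $\cos^2(2\pi\gamma)=\cos^2(\tfrac\pi4)=\tfrac12$ and hence $\gamma\cos^2(2\pi\gamma)=\tfrac1{16}\geq\tfrac1{20}$, yielding $\pr{s}\geq\frac{v}{20kq^n}$. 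The only place the noise enters, and the step I expect to be the crux, is the magnitude bound $\abs{\sum_{a\in V}\omega^{j^*e_a}}\geq v\cos(2\pi\gamma)$; the real tension is the trade-off that makes $\gamma$ want to be large (to have many good frequencies) and small (to keep $\cos(2\pi\gamma)$ away from $0$), and the constant $\tfrac1{20}$ is exactly the slack that both this optimization and the floor can afford.
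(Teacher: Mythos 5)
Your proposal is correct and follows essentially the same route as the paper: apply the QFT, express the success probability as $\frac{1}{vq^{n+1}}\sum_{j^*\in\Fq^*}\bigl|\sum_{a\in V}\omega^{j^*e_a}\bigr|^2$, restrict to the small frequencies $j^*\leq\gamma q/k$ where each phase stays within angle $2\pi\gamma$ of the real axis, lower bound by the real part, and optimize $\gamma$. Your write-up is in fact slightly more explicit than the paper's (you fix $\gamma=\tfrac18$ and note the floor issue, which the paper glosses over), but the argument is the same.
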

  \begin{proof}
     If we apply QFT on the state \ket{\psi}, we have
  \begin{align*}
    \frac{1}{\sqrt{q^{n+1}v}}\sum_{a \in
    V}\sum_{j\in\Fq^n, j^* \in \Fq} 
    \omega^{e_a j^* + a \cdot (j + j^*s)}
    \ket{j}\ket{j^*}.
  \end{align*}

From the last equation, we have that the probability that $j =
    -j^*s \pmod{q}$ and $j^* \ne 0$ is:
     \begin{align}
       &
      \frac{1}{q^{n+1}v}\norm{
\sum_{a \in
    V}\sum_{j^* \in \Fq^*} 
    \omega^{e_a j^*}
    \ket{-j^*s \pmod{q}}\ket{j^*}
       }^2  \nonumber \\ 
      &=
      \frac{1}{q^{n+1}v}\sum_{j^* \in \Fq^*}
      \left( \sum_{a \in V}
       \mathfrak{R}(\omega^{e_a j^*})
       \right)^2
      + \left( \sum_{a \in V}
       \mathfrak{I}(\omega^{e_a j^*})
       \right)^2
      \nonumber\\      
      &\geq
      \frac{1}{q^{n+1}v}
      \goodjn \left(
       \sum_{a \in V}
       \mathfrak{R}(\omega^{e_a j^*})
       \right)^2 \nonumber\\
      &\geq
       \frac{\gamma v\cos\left(2\pi\gamma\right)^2}{kq^{n}}\nonumber. 
    \end{align}
    where $\gamma \in (0, \frac{1}{4})$ $\mathfrak{R}(z)$ and $\mathfrak{I}(z)$ are
    the real and imaginary part of $z$, respectively. For the first inequality, 
    we have removed some positive
    quantities, and the last inequality follows from the fact that
    $\mathfrak{R}(\omega^{e_a j^*}) \leq \cos\left(2\pi\gamma\right)$
for $j^* \leq \inversebounddistribution$ and 
$\abs{e_a} \leq
    \bounddistribution$. The result follows by maximizing the quantity over all
    $\gamma \in (0, \frac{1}{4})$.
\end{proof}
We now propose an algorithm that tests a candidate solution.

~\\
\rule[1ex]{0.45\textwidth}{0.5pt}

\noindent{\bf Test Candidate}\\
{\bf Input:} $\tilde{s} \in \Fq^n$, $M \in \Z^+$\\
{\bf Output:} Accept/reject

\noindent Repeat $M$ times.

	Pick sample  $\ket{\psi} =  \frac{1}{\sqrt{v}} \sum_{a \in V} \ket{a}\ket{a \cdot s + e_a \mod q}$.	

  Measure the sample in the computational basis
  
  Let $(a', a'\cdot s + e_{a'})$ be the output

   If $|a'\cdot s + e_{a'} - a'\cdot \tilde{s}| > k$, reject

\noindent Accept

\noindent \rule[1ex]{0.45\textwidth}{0.5pt}

\begin{lemma} \label{lemma:test}
For $\tilde{s} = s$, Test Candidate$(\tilde{s}, M)$ accepts with probability $1$,  while for 
    $\tilde{s} \ne s$, Test Candidate$(\tilde{s}, M)$ accepts 
    with probability probability at most $\left(\frac{2k+1}{q}\right)^M$. 
\end{lemma}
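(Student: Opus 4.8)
The plan is to treat completeness and soundness separately, and the completeness part is essentially immediate. When $\tilde{s} = s$, the quantity tested in each round is $a' \cdot s + e_{a'} - a' \cdot \tilde{s} = e_{a'}$, and by the hypothesis of \Cref{thm:general} the noise satisfies $\abs{e_{a'}} \le k$. Hence the rejection condition $\abs{a' \cdot s + e_{a'} - a' \cdot \tilde{s}} > k$ is never met in any of the $M$ rounds, so the test accepts with probability $1$.

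For the soundness bound I would first pin down the distribution of a single measured sample. Although the quantum sample is built from a random subset $V$ of size $v$, measuring $\frac{1}{\sqrt{v}}\sum_{a \in V}\ket{a}\ket{a\cdot s + e_a}$ in the computational basis returns each fixed $a_0$ with probability $\pr{a_0 \in V}\cdot \frac{1}{v} = \frac{v}{q^n}\cdot \frac{1}{v} = \frac{1}{q^n}$, after averaging over the choice of $V$. Thus marginally the measured pair is distributed exactly as $(a', a'\cdot s + e_{a'})$ with $a'$ uniform over $\Fq^n$ and $e_{a'}$ drawn from the noise distribution independently of $a'$; moreover the $M$ rounds use fresh, independent samples.

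With this in hand, fix $\tilde{s} \ne s$ and set $\delta = s - \tilde{s} \neq 0$. In one round the test fails to reject precisely when $a' \cdot \delta + e_{a'} \in \{-k, \dots, k\} \Mod{q}$, a target set of at most $2k+1$ residues. Conditioning on the value of $e_{a'}$ merely shifts this set but does not change its size, so it suffices to bound $\pr{a' \cdot \delta \in T}$ for an arbitrary set $T$ with $\abs{T} = 2k+1$. Since $\Fq$ is a field and $\delta \neq 0$, the linear functional $a' \mapsto a' \cdot \delta$ is surjective and maps the uniform distribution on $\Fq^n$ to the uniform distribution on $\Fq$; therefore $\pr{a' \cdot \delta \in T} = \frac{2k+1}{q}$, independently of the conditioning on $e_{a'}$. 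A single round thus avoids rejection with probability $\frac{2k+1}{q}$, and since acceptance requires all $M$ independent rounds to avoid rejection, the overall acceptance probability is $\left(\frac{2k+1}{q}\right)^M$, giving the claimed bound.

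The only real subtlety, and the step I would be most careful about, is the uniformity claim for $a' \cdot \delta$: it relies on $\Fq$ being a field, so that $\delta \neq 0$ forces the functional to be onto, and it is also the reason the bound is stated as \emph{at most} rather than exactly. If $2k+1 > q$ the target set wraps around to all of $\Fq$ and the per-round probability is $1 \le \frac{2k+1}{q}$, so the bound degrades gracefully to a vacuous statement in that regime. The marginalization over $V$ in the second step is routine but worth stating explicitly, since it is what lets us discard the subset structure and argue as though the classical sample were produced with $a'$ uniform over all of $\Fq^n$ and $e_{a'}$ independent of it.
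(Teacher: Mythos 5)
Your proposal is correct and follows essentially the same route as the paper: completeness from $\abs{e_{a'}}\le k$, and soundness from the observation that for $\tilde{s}\ne s$ the measured value $a'\cdot(s-\tilde{s})+e_{a'}\pmod{q}$ is uniform over $\Fq$, so each round accepts with probability $\frac{2k+1}{q}$ and independence across the $M$ fresh samples gives the product bound. The extra care you take in marginalizing over the random subset $V$ and in conditioning on $e_{a'}$ makes explicit two steps the paper leaves implicit, but it is the same argument.
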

\begin{proof}
  Since
   $|a'\cdot s + e_{a'} - a'\cdot s| = |e_{a'}| \leq k$ by the noise
   distribution, it follows that the
   test passes with probability $1$ when $\tilde{s} = s$.

For a value $a'$ picked uniformly random from  $\Fq^n$, it follows that $a' \cdot (s - \tilde{s}) + e_{a'} \mod q$
  is uniformly distributed over $\F_q$
if $\tilde{s} \neq s$. Therefore, the probability that 
  it lies in the interval $[-k, k]$ is $\frac{2k+1}{q}$. Since the
  probability is independent for every iteration, the probability that
  $\tilde{s}$ is accepted on $M$ iterations is 
  $\left(\frac{2k+1}{q}\right)^M$.
\end{proof}

In \Cref{sec:lwe-theorem}, we show how to use the previous algorithms to achieve
the following theorem.

 \begin{theorem}
   \label{thm:crypto_lwe}
    For dimension $n$, let $q$ be a prime in the interval $[2^{n^\gamma},
    2\cdot 2^{n^\gamma})$.  Let 
    \[\ket{\psi} = \frac{1}{\sqrt{q^n}} \sum_{a \in \Fq^n}
    \ket{a}\ket{a \cdot s + e_a},\] 
    where the $e_a$ are random variables drawn from a noise
    distribution with noise magnitude at most 
    $k = poly(n)$. There is an algorithm that outputs $s$ with probability $1 - \eta$ with sample complexity $O(k\log{\frac{1}{\eta}})$ and running time $poly(n, \log \frac{1}{\eta})$.
 \end{theorem}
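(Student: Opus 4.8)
The plan is to treat Field Bernstein--Vazirani as a \emph{weak} learner and \Cref{lemma:test} as a cheap verifier, and then to amplify by repetition. The first observation is that the input state in the statement is the \emph{full} uniform superposition over $\Fq^n$, so it is exactly the special case $v = q^n$ of \Cref{thm:general}. Consequently a single call to Field Bernstein--Vazirani on a fresh sample returns the true secret with probability at least $\frac{q^n}{20kq^n} = \frac{1}{20k}$, and otherwise returns $\perp$ or some incorrect $\tilde s \neq s$. The obstacle is that one run gives no indication of whether its output is correct, so the heart of the argument is to pair this weak guarantee with a test that certifies a candidate cheaply.

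The algorithm I would analyse runs up to $R$ rounds; in each round it draws a fresh sample, applies Field Bernstein--Vazirani to produce a candidate $\tilde s$ (discarding the round on output $\perp$), and then runs Test Candidate$(\tilde s,M)$, halting and outputting $\tilde s$ the first time a candidate is accepted. By \Cref{lemma:test} the true secret is always accepted, so if some round produces $s$ and no round ever accepts a wrong candidate, then the algorithm outputs $s$ correctly. Hence the failure event is contained in the union of ``no round produces $s$'' and ``some round accepts an incorrect candidate''.

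For the first event, independence across rounds gives probability at most $\left(1-\frac{1}{20k}\right)^R \le e^{-R/(20k)}$, which is at most $\frac{\eta}{2}$ once $R = O(k\log\frac{1}{\eta})$. For the second event, the decisive quantitative input is the exponential gap between the modulus and the noise: since $q \ge 2^{n^\gamma}$ and $k = \poly(n)$, we have $\frac{2k+1}{q} \le 2^{-\Omega(n^\gamma)}$, so by \Cref{lemma:test} any single incorrect candidate survives Test Candidate$(\tilde s,M)$ with probability at most $\left(\frac{2k+1}{q}\right)^{M}$, already $2^{-\Omega(n^\gamma)}$ for $M=1$. A union bound over the at most $R$ tested candidates bounds the probability of ever accepting a wrong candidate by $R\left(\frac{2k+1}{q}\right)^{M}$, which is below $\frac{\eta}{2}$ for a small $M$ (a constant in the natural regime $\log\frac1\eta = O(n^\gamma)$). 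A final union bound gives overall success probability at least $1-\eta$.

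Reading off the resources, each round costs one sample for the weak learner and $M$ samples for the test; with $M$ constant this is $R(1+M) = O(k\log\frac1\eta)$ samples overall. Every elementary operation is efficient: the QFT over $\Fq$ acts on $n+1$ qudits of dimension $q$ with $\log q = \Theta(n^\gamma)$, the output $-(j^*)^{-1}j \bmod q$ is well defined and computable because $q$ is prime, and each test iteration is a measurement followed by $\poly(n)$ modular arithmetic; as there are $\poly(n,\log\frac1\eta)$ operations in total, the running time is $\poly(n,\log\frac1\eta)$. The step I expect to require the most care is precisely the sample bound: verifying every candidate separately threatens to multiply the sample count, and it is only the exponential reliability of the test (a single test sample suffices to reject a wrong candidate except with probability $2^{-\Omega(n^\gamma)}$) that keeps the verification cost negligible compared with the $O(k\log\frac1\eta)$ weak-learner runs.
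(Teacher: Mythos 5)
Your proposal is correct and follows essentially the same route as the paper: it is the paper's LWE Algorithm$(L,M)$ with $v=q^n$ in \Cref{thm:general} giving a $\frac{1}{20k}$ per-round success probability, the identical decomposition of the failure event into ``no round yields $s$'' plus ``some wrong candidate is accepted,'' the same union bounds, and the same parameter choices ($L=O(k\log\frac1\eta)$, $M$ constant). Your explicit remark that $M=O(1)$ suffices only because $q\ge 2^{n^\gamma}$ dwarfs $k=\poly(n)$ (and $\log\frac1\eta$ is not too large) is a slightly more careful accounting than the paper's, but the argument is the same.
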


We show then how to extend the result to related problems in
\Cref{sec:related-problems}.

\section{Open problems}

\subsection{Generalizing from linear functions}
Learning linear functions can be seen as finding a hidden subgroup $H = {a | a
\cdot s = 0}$ of $\Z_q^n$. Efficient algorithms for general Abelian Hidden Subgroup Problem are known \cite{CheungM01}\cite{Mosca2014}, and we leave as an open question if these algorithms are also tolerant to noise.

\subsection{LWE over rings} \label{sec:open-problems-ring}
Due to technical reasons regarding the representation of polynomials in Ring-LWE
instances (see  \Cref{sec:rings} for more details), our LWE algorithm cannot be used to solve Ring-LWE with quantum samples and we leave this question as an open problem.

\section*{Acknowledgments}

AG and IK thank Ronald de Wolf for helpful discussions. AG thanks also Lucas
Boczkowski, Brieuc Guinard, François Le Gall and Alexandre Nolin for helpful
discussions. Supported by ERC QCC and  French Programme d’Investissement
d’Avenir RISQ P141580.

\putbib[references]
\end{bibunit}

\appendix
\begin{bibunit}[plain]
  
  \section{Notation}
  For $n \in \mathbb{N}$, we define  $[n] := \{1, ..., n\}$.
  For a complex number $x = a + ib$, $a,b \in \R$, we define its norm $|x|$ by
  $\sqrt{a^2 + b^2}$, its real part $\mathfrak{R}(x) = a$ and its imaginary part
  $\mathfrak{I}(x) = b$. We denote $\omega$ as the $q$-th root of unity, where $q$
  will be clear by the context.
  For a field $\Fq$ and element $a \in \Fq$, we denote $|a|$ as the unique value
  $b \in \left[\frac{-(q-1)}{2}, \frac{q-1}{2}\right]$ such that $b \equiv a \pmod{q}$. 

  We remind now the notation for quantum information and computation. For
  readers not familiar with these concepts we refer Ref. \cite{NielsenC2011}.  Let $\{e_i\}$ be the standard basis for the
  $q$-dimensional Hilbert space $\mathbb{C}^q$. We denote here $\ket{i} = e_i$
  and a $q$-dimensional qudit is a unit vector in this space, i.e.
  $\ket{\psi} = \sum_{i \in \Fq} \alpha_i\ket{i}$, for $\alpha_i \in \mathbb{C}$ and
  $\sum_{i \in \Fq} \abs{\alpha_i}^2 = 1$. We call the state a qubit when $q =
  2$. A $k$-qudit quantum state is
  a unit vector in the complex Hilbert space $\C^{q^k}$ and we shorthand
  the basis states for this space $\ket{i_1}\otimes...\otimes\ket{i_k}$ with
  $\ket{i_1}...\ket{i_k}$.

\section{An efficient quantum learning algorithm for LWE}
  \label{sec:lwe-theorem}
 In this section we show how to use \Cref{thm:general,lemma:test} in order to solve solve LWE
 with quantum samples using noise distributions proposed in
  Brakerski and Vaikuntanathan \cite{BrakerskiV14}, proving
  \Cref{thm:crypto_lwe}. There, the field order $q$ is sub-exponential in the
dimension $n$, generally in $[2^{n^\gamma},2\cdot 2^{n^\gamma})$ for some constant $\gamma \in (0,1)$, while the noise
distribution $\chi$ produces samples with magnitude at most polynomial in $n$
(for instance linear).

\noindent \rule[1ex]{0.45\textwidth}{0.5pt}

\noindent
{\bf LWE Algorithm(L, M)}\\
{\bf Input:} $L, M \in \Z^+$\\
{\bf Output:} $\tilde{s} \in \Fq^n \cup \{\perp\}$

\noindent Repeat $L$ times:

Pick a quantum sample $\ket{\psi}$ 

 Run the Field Bernstein-Vazirani($\ket{\psi}$) to get output $\tilde{s}$

 Run Test Candidate$(\tilde{s}, M)$

If $\tilde{s}$ passes the test, return $\tilde{s}$

\noindent Return $\perp$.

\noindent 
\rule[1ex]{0.45\textwidth}{0.5pt}

\begin{theorem}\label{thm:main}
LWE Algorithm$(L, M)$ outputs $s$ with probability
\[1 - \left(1-\frac{v}{20kq^{n}}\right)^{L} - \left(\frac{3k}{q}\right)^M L. \]
\end{theorem}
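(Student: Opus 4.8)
The plan is to analyze the LWE Algorithm as a simple composition of its two subroutines, Field Bernstein-Vazirani and Test Candidate, whose individual performance guarantees are already established in \Cref{thm:general} and \Cref{lemma:test}. The algorithm succeeds exactly when, in one of the $L$ iterations, Field Bernstein-Vazirani happens to output the correct secret $s$ \emph{and} that correct candidate then passes the Test Candidate check; it fails either because no iteration ever produces $s$, or because some incorrect candidate $\tilde{s} \neq s$ is produced and happens to slip through the test. So the strategy is a union-bound argument: lower-bound the probability that at least one iteration recovers $s$, then subtract off the probability that the algorithm is fooled into accepting a wrong answer.

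\medskip

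\noindent\textbf{First I would} bound the probability that $s$ is never produced across the $L$ independent iterations. By \Cref{thm:general}, each run of Field Bernstein-Vazirani outputs $s$ with probability at least $\frac{v}{20kq^{n}}$ (here with $v = q^n$, though the general bound suffices), and since a fresh sample is drawn each time, the $L$ iterations are independent. Hence the probability that \emph{no} iteration outputs $s$ is at most $\left(1 - \frac{v}{20kq^{n}}\right)^{L}$, which accounts for the second term in the claimed bound. Crucially, by \Cref{lemma:test} the correct secret passes Test Candidate with probability $1$, so whenever $s$ \emph{is} produced it is certain to be returned (unless an earlier iteration already returned a wrong answer — which the next step controls).

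\medskip

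\noindent\textbf{Next I would} bound the probability of a false accept. A wrong candidate $\tilde{s} \neq s$ passes Test Candidate with probability at most $\left(\frac{2k+1}{q}\right)^M$ by \Cref{lemma:test}. There are at most $L$ iterations in which such a spurious candidate could be generated and tested, so by a union bound the total probability that \emph{some} iteration returns an incorrect $\tilde{s}$ is at most $L \left(\frac{2k+1}{q}\right)^M$. I would then relax $\frac{2k+1}{q} \leq \frac{3k}{q}$ (valid for $k \geq 1$) to match the stated $\left(\frac{3k}{q}\right)^M L$ term. Combining the two failure modes via the union bound, the success probability is at least $1 - \left(1 - \frac{v}{20kq^{n}}\right)^{L} - \left(\frac{3k}{q}\right)^{M} L$, as claimed.

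\medskip

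\noindent\textbf{The one subtlety I would be careful about} is the interaction between the two failure events: the algorithm returns the first candidate that passes, so a false accept in an early iteration could preempt a correct recovery in a later one. The clean way to handle this is to observe that the union bound already covers all bad outcomes — if neither failure event occurs (every produced wrong candidate is rejected \emph{and} at least one iteration produces $s$), then the first accepted candidate must be $s$, since $s$ always passes and no wrong candidate ever does. Thus bounding the two events separately and summing their probabilities is sound; no independence between the events is needed, only the union bound. This is the step most likely to trip up a careless write-up, but it is conceptually straightforward once framed this way.
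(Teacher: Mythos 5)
Your proposal is correct and follows essentially the same route as the paper: decompose failure into the event that no call to Field Bernstein-Vazirani returns $s$ (bounded by $(1-\frac{v}{20kq^n})^L$ via \Cref{thm:general} and independence) and the event that some wrong candidate passes Test Candidate (bounded by $L(\frac{2k+1}{q})^M \le L(\frac{3k}{q})^M$ via \Cref{lemma:test} and a union bound), then combine by a union bound. Your explicit handling of the ordering subtlety (an early false accept preempting a later correct recovery) is a welcome clarification of a point the paper's proof states more tersely, but it is the same argument.
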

\begin{proof}
LWE Algorithm$(L, M)$ does not output $s$ if either  Test Candidate$(\tilde{s}, \log{\frac{1}{\eta}})$  accepts some $\tilde{s} \ne s$ before an iteration where Field Bernstein-Vazirani outputs  $s$, or LWE Algorithm outputs $\perp$. We can upper bound the probability of this event by the probability  that at least one of $L$ independent calls to Test Candidate$(\tilde{s}, \log{\frac{1}{\eta}})$ accepts some $\tilde{s} \ne s$ or that $L$ independent calls to Field Bernstein-Vazirani do not output $s$. 

From \Cref{lemma:test} and using the union bound, the probability that at least one of $L$ independent calls to Test Candidate$(\tilde{s}, \log{\frac{1}{\eta}})$ accepts  some $\tilde{s} \ne s$  is at most
  \[\left(\frac{2k+1}{q}\right)^M L \leq
  \left(\frac{3k}{q}\right)^M L.
  \]

From \Cref{thm:general}, the probability that $s$ is not the output of $L$ independent calls to Field Bernstein-Vazirani is at most
  \[\left(1-\frac{v}{20kq^{n}}\right)^{L}.\]

By union bound, LWE Algorithm$(L, M)$ does not output $s$ with probability at most
  \[\left(1-\frac{v}{20kq^{n}}\right)^{L} + \left(\frac{3k}{q}\right)^M L. \qedhere \]
\end{proof}

\Cref{thm:crypto_lwe} follows directly from \Cref{thm:main} and picking $v = q^n$,  $L = 20k \ln \frac{1}{\eta}$ and $M = 1$.

\section{Quantum learning complexity of related problems}
\label{sec:related-problems}
In this section we present learning algorithms for problems that are related to LWE.

\subsection{Learning parity with noise}
We show here our result for Learning Parity with Noise (LPN) problem,
which is the LWE problem for $q=2$. 

Here, the parity bit is flipped independently for
each element in the superposition with probability $\eta$. This
is the same noise model proposed by Bshouty and Jackson \cite{Bshouty95}. Note
that Cross et al.~\cite{CrossSS15} studied LPN with different noise models. In
the first, all parities in the superposition are flipped at the same time with
probability $\eta$. In the second one, each qubit passed through a depolarising
channel. Our algorithm and analysis also works for both of the noise models proposed by
Cross et al.~\cite{CrossSS15}.

The algorithm is the same as in the previous section, where now the QFT is over
$\Ftwo$ (also called the Hadamard Transform H). 

\begin{lemma}
    \label{lem:parity}
    Let 
    $\frac{1}{\sqrt{2^n}} \sum_{a \in \{0,1\}^n}
    \ket{a}\ket{a \cdot s  + e_a \pmod{2}}$
    be a quantum sample where $e_a$ are iid random variables with value $0$ with
    probability $1-\eta$ and $1$ with probability $\eta$.

    For every constant $0 < \delta < 1$, applying a Hadamard transform on all qubits and measuring them in the computational
     basis, provides an outcome $\ket{j}\ket{j^*}$, where $j \in \{0,1\}^n$ and $j^* \in \{0,1\}$ such that 
     with probability exponentially close to $1$, $\pr{j = s} \geq
      \frac{1}{2} (1-\delta)^2(1-2\eta)^2$.
\end{lemma}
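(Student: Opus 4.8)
The plan is to compute the post-Hadamard amplitudes explicitly, isolate the amplitude of the single target outcome $\ket{s}\ket{1}$, and show that its squared magnitude is a random variable concentrated around $\frac{1}{2}(1-2\eta)^2$.

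First I would apply $H^{\otimes(n+1)}$ to the sample. Since the Hadamard transform sends $\ket{x}$ to $\frac{1}{\sqrt2}\sum_y(-1)^{xy}\ket{y}$, the state becomes
\[
\frac{1}{\sqrt{2^{2n+1}}}\sum_{a,j,j^*}(-1)^{a\cdot(j+j^*s)+e_a j^*}\ket{j}\ket{j^*},
\]
which is exactly the $q=2$ specialization of the transformed state in the proof of \Cref{thm:general}. The amplitude of $\ket{j}\ket{j^*}$ is therefore $\frac{1}{\sqrt{2^{2n+1}}}\sum_a(-1)^{a\cdot(j+j^*s)+e_aj^*}$, and for $j^*=0$ this vanishes unless $j=0$, so the $j^*=0$ branch never interferes with the target.

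Second I would restrict to $j=s$ and $j^*=1$. Over $\F_2$ we have $a\cdot(s+s)=0$, so the exponent collapses to $e_a$ and the amplitude becomes $\frac{1}{\sqrt{2^{2n+1}}}\sum_a(-1)^{e_a}$. Writing $S=\sum_a(-1)^{e_a}$, the probability of measuring $\ket{s}\ket{1}$ is exactly $S^2/2^{2n+1}$; since $\pr{j=s}\geq\pr{\ket{s}\ket{1}}$, it suffices to control the single real quantity $S$. Note that $S$ is a sum of $2^n$ i.i.d.\ $\pm1$ variables $X_a=(-1)^{e_a}$ with mean $\mathbb{E}[X_a]=1-2\eta$, so in expectation $S\approx 2^n(1-2\eta)$ and $\pr{j=s}\approx\frac{1}{2}(1-2\eta)^2$.

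The heart of the argument, and the only genuine obstacle, is turning this expectation into an exponentially high-probability guarantee that accounts for the factor $(1-\delta)^2$. Here I would apply a Hoeffding bound to $S$: since each summand lies in $[-1,1]$, taking the deviation $\delta\cdot 2^n(1-2\eta)$ of the mean gives
\[
\pr{S<(1-\delta)2^n(1-2\eta)}\leq\exp\!\left(-\tfrac{\delta^2(1-2\eta)^2}{2}\,2^n\right),
\]
which is exponentially small in $n$ for fixed $\delta$ and $\eta<\tfrac12$. The key point making this work is that the required additive deviation scales like $2^n$ while the per-term range is only a constant, so the tail is doubly-exponentially suppressed in the exponent. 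On the complementary (exponentially likely) event we have $S\geq(1-\delta)2^n(1-2\eta)$, hence $\pr{j=s}\geq S^2/2^{2n+1}\geq\frac{1}{2}(1-\delta)^2(1-2\eta)^2$, which is the claimed bound; all remaining steps are the standard Hadamard bookkeeping already performed for the general field case.
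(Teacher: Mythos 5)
Your proposal is correct and follows essentially the same route as the paper: apply $H^{\otimes(n+1)}$, observe that the amplitude of $\ket{s}\ket{1}$ reduces to $\frac{1}{\sqrt{2^{2n+1}}}\sum_a(-1)^{e_a}$, and apply Hoeffding to the sum of the $2^n$ i.i.d.\ $\pm1$ variables to get concentration at $(1-\delta)(1-2\eta)2^n$. Your Hoeffding exponent $\exp\left(-\delta^2(1-2\eta)^2 2^n/2\right)$ is in fact the correct one (the paper's displayed exponent contains sign and scaling typos), so nothing is missing.
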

  \begin{proof}
  If we apply Hadamards on each qubit of the sample state, we have
  \begin{align*}
    \frac{1}{2^{n+\frac{1}{2}}}\sum_{a \in \{0,1\}^n}\sum_{j\in\{0,1\}^{n+1}} 
    (-1)^{e_aj^* + a \cdot(j + j^*s)}
    \ket{j}\ket{j^*} 
  \end{align*}

   We now calculate the probability 
    that $j^* = 1$ and the first qubits are in the state $\ket{s}$: 

    \begin{align*}
      &\norm{\frac{1}{2^{n+\half}}\sum_{a \in \{0,1\}^n} 
      (-1)^{e_a  + a \cdot (s + s)}\ket{s}}^2
      \\
      &= \frac{1}{2^{2n}+1}
      \left(\sum_{a \in \{0,1\}^n}  (-1)^{e_a}\right)^2
    \end{align*}

  From the distribution of each $e_a$, we have that 
    $(-1)^{e_{a}}$ is $1$ w.p. $1-\eta$ and $-1$ w.p. $\eta$, independently. Therefore $\evalue{(-1)^{e_a}} = 1-2\eta$
  and using Hoeffding's bound  we have that
    \begin{align*}
      &\pr{\sum_{a \in \{0,1\}^{n}} (-1)^{e_{a}} \leq (1-\delta)(1-2\eta)2^{n}}
      \\&<e^{\delta^2 (1-2\eta)^2 2^{2n}/4} 
    \end{align*}

  Therefore, with probability exponentially close to $1$, the probability that
  $j = s$ is at least
  \[
    \frac{1}{2^{2n}+1} ((1-\delta)(1-2\eta)2^{n})^2 = 
    \frac{1}{2}(1-\delta)^2(1-2\eta)^2. \qedhere
  \]
\end{proof}

We can test a candidate solution $\tilde{s}$ in the lines of \Cref{lemma:test} and then repeat
the process a linear number of times, and in this case the algorithm can find
the right $s$ with probability exponentially close to $1$.

\subsection{LWE over rings}
\label{sec:rings}

The Ring-LWE problem\cite{LPR10}, a variant of LWE over the ring of polynomials, has been proposed in order to improve the performance of cryptographic constructions using LWE, at the cost of needing
stronger assumptions for proving its hardness.

The Ring-LWE problem uses the structure of the ring $\mathcal{R}_q = \bigslant{\mathcal{R}}{q\mathcal{R}}$ for a prime $q$, $\mathcal{R} = \bigslant{\mathbb{Z}[x]}{f(x)}$ and a cyclotomic polynomial $f(x)$. 
As in LWE, a Ring-LWE sample is the pair $(a, as + e \pmod{q})$ for random $s,a
\in \mathcal{R}_q$ and $e$ is picked according to some error distribution $\chi$.

Unfortunately, our algorithm cannot be used to solve Ring-LWE with the noise model proposed by Bshouty et al.~\cite{Bshouty95}, due to technical issues on representing the polynomials.
In order to use the quantum learning algorithm for LWE, we need to find an isomorphism $\phi$ from $\mathcal{R}_q$ to $\mathbb{Z}_q^n$, where $n = \varphi(m)$ is the number of invertible elements modulo $m$.
With this isomorphism, we can consider a sample $(a, as+e) \in \mathcal{R}_q^2$ as two vectors in $\mathbb{Z}_q^n$, and a superposition of quantum states representing these vectors can be written as:
\begin{equation*} 
\ket{\psi} = \frac{1}{\sqrt{q^n}} \sum_{a \in \mathcal{R}_q} \ket{\phi(a)}\ket{\phi(as + e_a)},
\end{equation*}
and applying the QFT over every register of this state results in 
\begin{align} \label{eq:qftring}
  &QFT^{\otimes 2n} \ket{\psi} \\
  &= \frac{1}{\sqrt{q^{3n}}} \sum_{a \in \mathcal{R}_q} \sum_{x,y \in \mathbb{Z}_q} \omega^{\phi(a) \cdot x} \ket{x} \otimes \omega^{\phi(as + e_a) \cdot y} \ket{y} \\
&= \frac{1}{\sqrt{q^{3n}}} \sum_{a \in \mathcal{R}_q} \sum_{x,y \in \mathbb{Z}_q} \omega^{\phi(a)\cdot(x+y\phi(s)) + y\cdot\phi(e_a)} \ket{x}\ket{y}, \nonumber
\end{align}
where the second equality holds because $\phi$ is a homomorphism.

We consider two ways of representing elements in $\mathcal{R}_q$ as integer vectors.
The first one consists of identifying a polynomial in $\mathcal{R}_q$ with the vector containing its coefficients. However, this 
coefficient embedding is not a homomorphism to $\Z_q^n$, and the following identity, used in \cref{eq:qftring}, does not hold
\begin{equation*}
\phi(a)\cdot x + \phi(a\cdot s + e_a)y = \phi(a)(x + y\phi(s)) + y \phi(e_a).
\end{equation*}

Therefore, this representation of polynomials cannot be used within our learning algorithm.

The second way of representing a polynomial is through the map 
\begin{align*}
\phi(p(x)) = (p(\omega_m), \dots, p(\omega_m^{m-1})),
\end{align*}
where  $\omega_m \in \mathbb{Z}_q$ be a primitive $m$-th root of unity.
This map is particularly interesting since multiplication  $\mathbb{Z}_q^n$ is done component-wise\cite{LPR13}
and therefore it can be used in implementations of Ring-LWE with efficient multiplication\cite{Mayer16}.
However, in these constructions, the error is sampled from a distribution over
polynomials with small coefficients and when after applying the isomorphism,
$\phi(e_a)$ can be arbitrarily large in $\mathbb{Z}_q^n$ which cannot be handled
by our algorithm if the error is independent for each element in the
superposition.

~\\
Finally, we show now how to do solve Ring-LWE for the error model presented in Cross et al.~\cite{CrossSS15}, namely, the noise is the same for all elements in the superposition.

Let $\phi$ be any isomorphism from $\mathcal{R}_q$ to $\mathbb{Z}_q^n$. We can map the original quantum sample using $\phi$ resulting in 
\begin{align*}
\frac{1}{\sqrt{q^{n}}} \sum_{a \in \mathcal{R}_q} \ket{\phi(a)} \otimes \ket{\phi(as + e)}.
\end{align*}
and using the Field Bernstein-Vazirani algorithm on this state we have
\begin{align*}
  &QFT^{\otimes 2n} \ket{\psi} \\
  &= \frac{1}{\sqrt{q^{3n}}} \sum_{a \in \mathcal{R}_q} \sum_{x,y \in \mathbb{Z}_q} \omega^{\phi(a) \cdot x} \ket{x} \otimes \omega^{\phi(as + e) \cdot y} \ket{y} \\
&= \frac{1}{\sqrt{q^{3n}}} \sum_{y\in \Z_q}  
\omega^{y\cdot\phi(e)} 
\sum_{a \in \mathcal{R}_q} \sum_{x \in \mathbb{Z}_q} \omega^{\phi(a)\cdot(x+y\phi(s)))} \ket{x}\ket{y}.
\end{align*}
By measuring the last register, the error becomes a global phase and we are able to retrieve $s$ as shown in \Cref{sec:learning-parity}.

\subsection{Learning with Rounding}
LWE has been used in the construction of several cryptographic primitives.
However, its usage sometimes is limited. For instance, in the implementation of
pseudo-random functions, the output must use little or no randomness, which does
not correspond to the inherent randomness in
LWE's input.

For this purpose, Banerjee, Peikert and Rosen\cite{BPR11} proposed a derandomized version of LWE called Learning with Rounding (LWR), which does not compromise hardness. LWR has been used in the construction of pseudo-random functions \cite{BPR11} and deterministic public key encryption \cite{XXZ12}.

The main idea of LWR consists in replacing $a \cdot s + e_a$
by the 'rounding' of $a\cdot s$  with respect to some modulus $p\ll q$, which can be seen as a ``deterministic noise''.
More precisely, the rounding function is defined as follows:
\begin{align*}
\left\lfloor \cdot \right\rceil_p : \mathbb{Z}_q &\rightarrow \mathbb{Z}_p, \text{ with} \left\lfloor x \right\rceil_p  = \left\lfloor \frac{p}{q}x \right\rceil \pmod{p}.
\end{align*}
An LWR sample is then given by $(a, \left\lfloor a \cdot s \right \rceil_p)$ for some $a$ sampled from the uniform distribution on $\Fq^n$. 

\begin{corollary}
 Let 
    \[\ket{\psi} = \frac{1}{\sqrt{q^n}} \sum_{a \in \Fq^n}
    \ket{a}\ket{\left\lfloor a \cdot s \right \rceil_p},\] 
    be a quantum LWR sample. Let $\ket{\phi}$ be the state when we multiply the
    last register of $\ket{\psi}$ with $\frac{q}{p}$. The Field Bernstein-Vazirani$(\ket{\phi})$ outputs $s$ with probability
     at least $\frac{p}{12(q-1)}$.
\end{corollary}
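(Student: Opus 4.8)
The plan is to reduce the Learning with Rounding case to the noisy-LWE case already analyzed in \Cref{thm:general}, by exhibiting the rounding operation as a bounded additive noise term and then invoking that theorem with the appropriate parameters. First I would write out what the state $\ket{\phi}$ actually is: starting from $\ket{\psi} = \frac{1}{\sqrt{q^n}} \sum_{a \in \Fq^n} \ket{a}\ket{\lfloor a \cdot s \rceil_p}$ and multiplying the last register by $\frac{q}{p}$, the second register holds the value $\frac{q}{p}\lfloor a \cdot s \rceil_p = \frac{q}{p}\lfloor \frac{p}{q}(a\cdot s) \rceil \pmod q$. The key algebraic observation is that this equals $a \cdot s + e_a \pmod q$ for a deterministic ``error'' $e_a := \frac{q}{p}\lfloor \frac{p}{q}(a\cdot s)\rceil - a\cdot s$. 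Since rounding to the nearest integer incurs an error of absolute value at most $\frac{1}{2}$ before scaling, we get $|e_a| \leq \frac{q}{2p}$.

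Once this is set up, the state $\ket{\phi}$ has exactly the form required by \Cref{thm:general}, namely $\frac{1}{\sqrt{v}}\sum_{a \in V}\ket{a}\ket{a\cdot s + e_a \pmod q}$ with $V = \Fq^n$ (so $v = q^n$) and noise magnitude bound $k = \frac{q}{2p}$. Applying \Cref{thm:general} directly, the Field Bernstein-Vazirani algorithm outputs $s$ with probability at least $\frac{v}{20 k q^n} = \frac{q^n}{20 \cdot \frac{q}{2p} \cdot q^n} = \frac{2p}{20 q} = \frac{p}{10 q}$. I would then reconcile this with the stated bound $\frac{p}{12(q-1)}$: since $\frac{p}{10q} \geq \frac{p}{12(q-1)}$ for the relevant range of $q$, the claimed bound follows, and the slight discrepancy in constants is presumably an artifact of a slightly different bookkeeping (e.g., optimizing the $\gamma$ parameter inside the proof of \Cref{thm:general} with the specific value $k=\frac{q}{2p}$, or using the looser bound $|e_a| \le \frac{q}{p}$ rather than $\frac{q}{2p}$).

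The one genuine subtlety — and the step I expect to require the most care — is verifying that $e_a$, although deterministic, satisfies the hypotheses of \Cref{thm:general}. That theorem is stated for \emph{random} $e_a$ with $|e_a| \le k$, but inspecting its proof shows the only property used is the pointwise bound $\mathfrak{R}(\omega^{e_a j^*}) \le \cos(2\pi\gamma)$ for $j^* \le \frac{\gamma q}{k}$ and $|e_a| \le k$, which holds identically whether the $e_a$ are random or fixed. Hence no independence or distributional assumption is actually needed, and the deterministic LWR noise is admissible. I would make this observation explicit so the reduction is rigorous. Finally, I would note that because the noise is deterministic, there is no need for a separate Test Candidate stage on the magnitude of the error — the output of Field Bernstein-Vazirani already has the stated success probability, and one could boost it by repetition exactly as in \Cref{thm:main}.
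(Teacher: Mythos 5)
Your proposal is correct and takes essentially the same route as the paper: both rewrite $\frac{q}{p}\left\lfloor a \cdot s \right\rceil_p$ as $a\cdot s + e_a \pmod{q}$ with $\abs{e_a} \leq \frac{q}{2p}$ and invoke \Cref{thm:general} with $k = \frac{q}{2p}$ and $v = q^n$. Your two additional observations --- that the proof of \Cref{thm:general} only uses the pointwise bound $\abs{e_a}\le k$, so the deterministic rounding error is admissible, and that the resulting $\frac{p}{10q}$ dominates the stated $\frac{p}{12(q-1)}$ once $q\ge 6$ --- are exactly the points the paper leaves implicit, and making them explicit is a genuine improvement in rigor.
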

\begin{proof}
For a fixed $a$,  we have that
\[\frac{q}{p}\left\lfloor a \cdot s \right \rceil_p  = a\cdot s + \left(\frac{q}{p}\left\lfloor a \cdot s \right \rceil_p - a\cdot s\right)\pmod{q} .\]
Since $\frac{-q}{2p} \leq \frac{q}{p}\left\lfloor a \cdot s \right \rceil_p - a\cdot s \leq \frac{q}{2p}\pmod{q} $, 
the result follows by 
\Cref{thm:general} for  $k = \frac{q}{2p}$.
\end{proof}
	
\subsection{Quantum Samples for SIS problem}	
We present in this section a learning algorithm for another relevant problem in cryptography, the Short Integer Solution problem.  As the name indicates, the Short Integer Solution problem (SIS) consists in finding a short integer solution for a  system of linear equations, and we present now its formal definition.

\begin{definition}[Short Integer Solution]
Given a random matrix $A \in \Fq^{m \times n}$, a random vector $z \in
  \Fq^m$, the SIS$_{n,m,q,\beta}$ problem is to find a
  vector $x \in \Fq^n$ such
  that $Ax = z \pmod{q}$ with $\left\|x\right\| < \beta$. 
\end{definition}

As in the LWE case, the hardness of SIS is also proved through the reduction of
(expected to be) hard lattice problems~\cite{Ajtai96}\cite{MicciancioR07}\cite{GentryPV08}\cite{MicciancioP13}.
We remark that if we drop either the constraint of
having an {\em integer} solution or having a {\em short} solution, the problem
can be easily solved using Gaussian Elimination. 

The SIS problem and its variants have been used to prove security of
constructions of signature schemes~\cite{Ly12}\cite{DDLL13}, and hash
functions~\cite{LMPR08}. In these schemes, samples in the form $(A, Av)$ are
public, where $v$ is a small random vector and $A$ is a random matrix.

Inspired in the LWE case,  we can define a quantum sample for SIS problem as
\[\ket{\psi} = \frac{1}{\sqrt{q^{nm}}}\sum_{A \in \Fq^{m\times n}}
\ket{A}\ket{Av} \pmod{q},\]
and we are interested in the sample complexity of finding the (fixed) short solution $v$. Using Field Bernstein-Vazirani brings the same problem of Gaussian Elimination: there is no guarantee of finding a short solution instead of an arbitrary one.

We notice that tracing out $m-1$ rows of $A$ and the corresponding positions of $Av$, we remain with
\begin{equation*}\label{eq:sample-sis}
\frac{1}{\sqrt{q^{n}}}\sum_{a \in \Fq^{n}} \ket{a}\ket{a\cdot v},
\end{equation*}
and we show an algorithm that works even for this type of quantum sample.

The algorithm consists by testing all possible values $j \in \{-k,...,k\}$ of $-v_i$. The test on $j = -v_i$ passes with probability $1$, while the test rejects with constant probability for $j \ne -v_i$. By repeating the test $L$ times, the probability of finding the correct value is amplified.

\noindent \rule[1ex]{0.45\textwidth}{0.5pt}
~\\
{\bf SIS Algorithm(L)}\\
{\bf Input:} $L \in \Z^+$\\
{\bf Output:} $\tilde{v} \in \Fq^n$

\noindent
For $i \in [n]$ do:

For $j \in \{-k,...,k\}$ do:

\hspace{1em} For $l \in [L]$:

\hspace{1.8em}  Pick a quantum sample $\frac{1}{\sqrt{q^{n}}}\sum_{a \in \Fq^{n}} \ket{a}\ket{a\cdot v}.$

\hspace{1.8em} Add $ja_i$ to the last register

\hspace{1.8em}  Apply QFT on the $i$-th qudit of $a$ and measure it

\hspace{1.8em}  Test next value of $j$ if outcome is not $\ket{0}$

\hspace{1em} Set $\tilde{v}_i = -j$ and continue with the next value of $i$.

\noindent Output $\tilde{v}$

\noindent \rule[1ex]{0.45\textwidth}{0.5pt}

\begin{theorem}
Let $v \in \Fq^n$ whose coefficients are all smaller in absolute value than some bound $k$. Given the quantum samples in the form
\begin{equation*}
\ket{\psi} = \sum_{a \in \Fq^{n}} \ket{a}\ket{a\cdot v}, 
\end{equation*}  
SIS Algorithm$(L)$ outputs $v$ with probability 
$1-\frac{2km}{q^L}$.
\end{theorem}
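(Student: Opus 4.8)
The plan is to isolate a single pass of the innermost loop, for a fixed coordinate $i\in[n]$ and a fixed candidate $j\in\{-k,\dots,k\}$, and show that it behaves exactly like a one-qudit version of the analysis in \Cref{sec:learning-parity}; the global guarantee then follows by independence across the $L$ repetitions together with a union bound over all coordinates and candidates. First I would track the state through the two classical post-processing steps. Starting from $\frac{1}{\sqrt{q^n}}\sum_{a\in\Fq^n}\ket{a}\ket{a\cdot v}$ and adding $ja_i$ to the last register yields $\frac{1}{\sqrt{q^n}}\sum_{a}\ket{a}\ket{a_i(v_i+j)+T}$, where I abbreviate $T=\sum_{i'\neq i}a_{i'}v_{i'}$, a quantity independent of $a_i$. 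Writing $c=v_i+j$, the last register records $a_ic+T$, so the $i$-th input coordinate plays precisely the role of the single Bernstein-Vazirani qudit with hidden slope $c$.

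Next I would compute the probability that measuring the $i$-th qudit, after the QFT over $\Fq$, returns $\ket{0}$, splitting into two cases. If $j=-v_i$ then $c=0$, the last register no longer depends on $a_i$, and the state factorizes so that the $i$-th qudit is the uniform superposition $\frac{1}{\sqrt q}\sum_{a_i}\ket{a_i}$; the QFT sends this to $\ket{0}$ with certainty, so the correct candidate passes all $L$ iterations, and since $|v_i|\le k$ the value $-v_i$ always lies in the scanned range $\{-k,\dots,k\}$. If $j\neq -v_i$ then $c\neq 0$, and this is the step I expect to require the most care: because the last register stores $a_ic+T$ with $c\neq 0$ and $q$ prime, the map $a_i\mapsto a_ic+T$ is a bijection on $\Fq$ for each fixed $a_{-i}$, so after projecting onto $\ket{0}$ on the $i$-th qudit the $q^n$ terms land on pairwise distinct basis states and no interference occurs. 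Summing the squared amplitudes $q^{-(n+1)}$ over these $q^n$ terms gives outcome $\ket{0}$ with probability exactly $1/q$. The delicate point is that we measure a single qudit rather than the whole register: the argument goes through precisely because, when $c\neq0$, the last register disambiguates the distinct values of $a_i$ and thereby destroys the interference that would otherwise occur, whereas when $c=0$ the constant last register leaves the interference intact.

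Finally I would assemble the bound. Since each of the $L$ iterations draws a fresh independent sample, for a wrong candidate the probability of obtaining $\ket{0}$ on all $L$ rounds is $(1/q)^L=1/q^L$. For each coordinate there are at most $2k$ incorrect candidates in $\{-k,\dots,k\}$, and there are $n$ coordinates, so a union bound shows that the probability that any incorrect candidate is ever accepted is at most $2kn/q^L$. As the correct candidate is accepted with certainty and is always scanned, the algorithm returns $v$ with probability at least $1-2kn/q^L$, which is the claimed bound with the number of coordinates of the solution vector serving as the index in the statement.
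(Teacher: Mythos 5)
Your proposal is correct and follows essentially the same route as the paper: track the state after adding $ja_i$, observe that the correct candidate $j=-v_i$ makes the $i$-th qudit factor out as a uniform superposition (so the QFT yields $\ket{0}$ with certainty), show that a wrong candidate yields $\ket{0}$ with probability exactly $1/q$ per round, and finish with a union bound over the at most $2k$ wrong candidates per coordinate and the $n$ coordinates. The only cosmetic difference is that you justify the $1/q$ via a no-interference amplitude count (the last register records $a_i c+T$ bijectively in $a_i$), whereas the paper phrases it as the reduced density matrix of the $i$-th qudit being maximally mixed; these are equivalent, and your final bound $1-2kn/q^L$ matches the paper's once its $m$ is read as the number of coordinates $n$.
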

\begin{proof}
We start by doing the analysis of SIS algorithm for $i = 1$.
After adding $ja_1$ to the last register of the quantum sample, we have
  \begin{align*}
    &\frac{1}{\sqrt{q^{n}}}\sum_{a \in \Fq^{n}} \ket{a}\ket{a\cdot v + a_1j} \\
    &= \frac{1}{\sqrt{q^{n}}}\sum_{a_1 \in \Fq} \sum_{\overline{a} \in \Fq^{n-1}} \ket{a_1} \ket{\overline{a}} \ket{a_1 (v_1+j) + \overline{a} \cdot \overline{v}}.
  \end{align*}

If $j = -v_1$, then the previous state is the product state
\begin{equation*}
\frac{1}{\sqrt{q}}\sum_{a_1 \in \Fq} \ket{a_1} \otimes 
\frac{1}{\sqrt{q^{n-1}}} \sum_{\overline{a} \in \Fq^{n-1}} \ket{\overline{a}}\ket{\overline{a} \cdot \overline{v}}.
\end{equation*}
and since $QFT\sum_{a_1 \in \Fq} \ket{a_1} = \ket{0}$, the test passes for all $l \in [L]$.

On the other hand,  if $j \ne -v_1$, then the state is entangled, and the reduced density matrix of the first register is
\begin{equation*}
\frac{1}{q} \sum_{a_1 \in \Fq} \kb{a_1}.
\end{equation*}
In this case, after applying the QFT on the first register and measuring it, the output is $\ket{0}$ with probability $\frac{1}{q}$. Therefore,  we have $\tilde{v}_1 = -j$ if for all $L$ independent samples the  measurement outcome after the QFT is $\ket{0}$, and this happens with probability $\frac{1}{q^L}$.
By the union bound, the probability that the test passes for any value $j \ne -v_i$ is at most $\frac{2k}{q^L}$.

Finally, the previous analysis holds for every $i \in [n]$. Since  
 $v \ne \tilde{v}$ iff there exists an $i \in [n]$ such that $\tilde{v}_i \ne v_i$, we can use union bound again to show that  this happens with probability  at most $\frac{2km}{q^L}$.
\end{proof}

By picking $L = \max\{1, \frac{\log\frac{2km}{\eta}}{\log q}\}$, the algorithm outputs the correct $v$ with probability at least $1 - \eta$.

\putbib[references]
\end{bibunit}
\end{document}